\title{On Active Learning and Supervised Transmission of Spectrum Sharing Based Cognitive
Radios by Exploiting Hidden Primary Radio Feedback\footnote{This
paper has been presented in part at IEEE Global Communications
Conference (Globecom), December 2009.}\footnote{R. Zhang is with the
Institute for Infocomm Research, A*STAR, Singapore and the
Department of Electrical and Computer Engineering, National
University of Singapore (e-mail:rzhang@i2r.a-star.edu.sg).}}
\author{Rui Zhang}
\begin{document}
\maketitle \thispagestyle{empty} \vspace{-0.5in}
\begin{abstract}
This paper studies the wireless spectrum sharing between a pair of
distributed primary radio (PR) and cognitive radio (CR) links.
Assuming that the PR link adapts its transmit power and/or rate upon
receiving an interference signal from the CR and such transmit
adaptations are observable by the CR, this results in a new form of
feedback from the PR to CR, refereed to as \emph{hidden PR
feedback}, whereby the CR learns the PR's strategy for transmit
adaptations without the need of a dedicated feedback channel from
the PR. In this paper, we exploit the hidden PR feedback to design
new learning and transmission schemes for spectrum sharing based
CRs, namely \emph{active learning} and \emph{supervised
transmission}. For active learning, the CR initiatively sends a
probing signal to interfere with the PR, and from the observed PR
transmit adaptations the CR estimates the channel gain from its
transmitter to the PR receiver, which is essential for the CR to
control its interference to the PR during the subsequent data
transmission. This paper proposes a new transmission protocol for
the CR to implement the active learning and the solutions to deal
with various practical issues for implementation, such as time
synchronization, rate estimation granularity, power measurement
noise, and channel variation. Furthermore, with the acquired
knowledge from active learning, the CR designs a {\it supervised}
data transmission by effectively controlling the interference powers
both to and from the PR, so as to achieve the optimum performance
tradeoffs for the PR and CR links. Numerical results are provided to
evaluate the effectiveness of the proposed schemes for CRs under
different system setups.
\end{abstract}

\begin{keywords}
Active learning, cognitive radio, hidden feedback, spectrum sharing,
supervised transmission.
\end{keywords}

\setlength{\baselineskip}{1.3\baselineskip}

\newtheorem{definition}{Definition}
\newtheorem{assumption}{Assumption}
\newtheorem{theorem}{\underline{Theorem}}[section]
\newtheorem{lemma}{\underline{Lemma}}[section]
\newtheorem{corollary}{Corollary}
\newtheorem{proposition}{\underline{Proposition}}[section]
\newtheorem{example}{\underline{Example}}[section]
\newtheorem{remark}{\underline{Remark}}[section]
\newtheorem{algorithm}{\underline{Algorithm}}[section]

\section{Introduction}

Opportunistic spectrum access (OSA) and spectrum sharing (SS) are
two basic operation models for the secondary radio or so-called
cognitive radio (CR) system to operate over a common frequency band
with an existing primary radio (PR) system. For the OSA model (see,
e.g., \cite{Haykin05}), the CR usually deploys a spectrum sensing
technique to detect the PR transmission on-off status over the
frequency band of interest, and decides to transmit over this band
if the sensing result indicates that the PR is not transmitting with
a high probability. In contrast, the SS model (see, e.g.,
\cite{Tarokh06,Viswanath09,Zhang10}) allows the CR to transmit
concurrently with the PR over the same frequency band, provided that
the CR knows how to control its interference to the PR such that the
resultant PR performance degradation is tolerable. Since SS-based
CRs in general utilize the spectrum more efficiently than OSA-based
CRs, this paper focuses on the SS model for CRs.

One commonly adopted method for SS-based CRs to protect the PR
transmission is via imposing an {\it interference temperature}
constraint (ITC) over the CR transmission, i.e., the CR interference
power level at each PR receiver must be kept below a prescribed
threshold \cite{Gastpar07a,Ghasemi07,Aissa09,Kang08}. Some important
design issues related to the ITC-based approach are discussed as
follows. First, the effectiveness of the ITC to protect the PR
transmission needs to be addressed. In \cite{Zhang08a} and
\cite{Zhang08b}, it has been shown that the ITC guarantees an upper
bound on the maximum capacity loss of the PR channel due to the CR
interference. In \cite{Zhang09a}, an interesting {\it interference
diversity} phenomenon was discovered, where the average ITC over
different fading states was shown to be superior over the peak ITC
counterpart for minimizing the PR ergodic/outage capacity losses.
Second, it is pertinent to investigate more efficient methods for
the CR to protect the PR than that with a fixed ITC. Such methods
may exploit additional side information on the PR transmissions such
as the PR's on-off status \cite{Zhang08b}, Automatic Repeat reQuest
(ARQ) feedback \cite{Gastpar07b}, channel state information (CSI)
\cite{Zhang08b,Chen07}, spatial signal space
\cite{Zhang08a,Zhang08d}, and frequency power allocation
\cite{WenyiZhang}, in order to set more appropriate interference
power levels over time, frequency, or space for CR's opportunistic
transmission. Thus, conventional ITCs are replaced by the more
relevant {\it PR performance loss} constraints
\cite{Zhang08b,Kang09}. However, although these new methods are
promising to improve the PR and CR spectrum sharing throughput, they
usually require substantial overheads for implementation as compared
with the ITC. Third, even implementation of the ITC requires
knowledge of the channel gain from the CR transmitter to the PR
receiver, which is difficult to obtain for the CR without a
dedicated feedback channel from the PR. If the PR link adopts a
time-division-duplex (TDD) mode and thus the channel reciprocity
holds between PR and CR terminals, the CR-to-PR channel gain can
then be estimated by the CR from its observed PR signals, assuming
prior knowledge of the PR transmit power. However, if a
frequency-division-duplex (FDD) mode is adopted by the PR (i.e., PR
terminal transmits and receives over two different frequency bands),
channel reciprocity between PR and CR terminals does not hold in
general. As a result, estimating CR-to-PR channels from the observed
PR signals may fail for the CR.

Motivated by the above discussions, this paper presents a new design
paradigm for SS-based CRs, which resolves the CR-to-PR channel
estimation problem for the CR, and also leads to a more efficient
spectrum sharing solution than the conventional one with fixed ITCs.
The proposed method exploits an interesting PR-CR interaction by
assuming that the PR deploys certain form of transmit power and/or
rate adaptations upon receiving an interference signal from the
CR.\footnote{Under this assumption, this paper considers PR systems
that have two-way communications such that one node can send control
signals to the other node for transmit adaptation. Such PR systems
apparently do not apply to one-way communication systems (e.g., the
TV broadcasting system considered for WRAN \cite{802.22}), but may
find applications in existing cellular-based wireless systems (see,
e.g., \cite{Femtocell}).} Specifically, suppose that the CR
initially transmits a probing signal to interfere with the PR
receiver, which then sends back a control signal (via the PR
feedback channel) to the PR transmitter for adapting transmit power
and/or rate accordingly; finally, the PR transmit adaptations are
observed by the CR. Thereby, the CR obtains knowledge on the PR
deployed strategy for transmit adaptations without the need of a
dedicated feedback channel from the PR. This implicit form of
feedback from the PR to CR is thus named as {\it hidden PR
feedback}. Since the CR initiatively sends a probing signal to
interfere with the PR for activating the hidden PR feedback, this
``active learning'' principle is different from existing ``passive
learning'' counterpart (e.g., detecting the PR on-off status or
estimating the CR-to-PR channel gain via sensing the PR band only)
for the design of CR systems. However, it should be pointed out that
the probing signal from the CR can cause a temporary performance
degradation of the PR, and thus needs to be properly designed
(details will be given later in the paper). The use of active
learning approach for designing new spectrum sensing techniques for
OSA-based CRs have been studied in \cite{Zhang08c} and \cite{Li09},
while in this paper we apply this interesting approach to design new
learning and transmission schemes for SS-based CRs. It is worth
noting that although iteratively adapting transmit power and rate to
cope with the co-channel interference among users in decentralized
communication systems has been studied in the literature (see, e.g.,
\cite{Foschini,Yates,Yu02}), the approach of exploiting the PR
transmit adaptations to design new operation schemes for the CR is a
new contribution of this paper. Based on the hidden PR feedback,
this paper proposes two new types of operations for SS-based CRs,
which are described as follows.

\begin{itemize}
\item {\it Active Learning}: By probing the PR with interference and observing its transmit power/rate adaptations,
under certain conditions, the CR is able to estimate the channel
gain from its transmitter to the PR receiver, which is essential for
the CR to control its interference to the PR during subsequent data
transmission. We refer to this new scheme for the CR as {\it active
learning}, to differ it from existing passive learning schemes in
the literature.

\item {\it Supervised Transmission}: With the acquired knowledge on
the CR-to-PR channel gain and the PR transmit adaptations from
active learning, the CR is able to design a {\it supervised data
transmission} via controlling the interference power levels both to
and from the PR. Thus, the CR ensures that the resultant performance
degradation of the PR is within a tolerable margin, and the CR
achievable rate is optimized under the ``feedback'' interference
from the PR, which is in general coupled with the CR transmit power
due to the CR-to-PR interference and the resultant PR power
adaptation.
\end{itemize}

This paper proposes a new transmission protocol for the CR to
implement active learning, together with solutions to deal with
various important practical issues such as time discrepancy between
the PR and CR links, CR rate estimation granularity and power
measurement noise, and PR/CR channel variations. This paper also
analyzes the PR and CR jointly achievable rates with the CR
supervised transmission. Moreover, this paper evaluates the
effectiveness of the proposed CR learning and transmission schemes
when the PR employs different transmit power/rate adaptation schemes
over the fading channels \cite{Goldsmith97}.

The rest of this paper is organized as follows. Section
\ref{sec:system model} presents the system model. Section
\ref{sec:PR feedback} describes the hidden PR feedback with
different PR transmit adaptation strategies. Section
\ref{sec:learning} presents the active learning method for the CR to
estimate the CR-to-PR channel gain, a protocol to implement this
method and various solutions to deal with practical issues. Section
\ref{sec:CR transmission} studies the CR supervised data
transmission by analyzing the achievable rates of both the PR and CR
links. Section \ref{sec:numerical examples} provides numerical
examples to corroborate the proposed studies. Finally, Section
\ref{sec:conclusion} concludes the paper.

\section{System Model}\label{sec:system model}

As shown in Fig. \ref{fig:system model}, for the purpose of
exposition, this paper considers a simplified spectrum sharing
system, where one CR link consisting of a CR transmitter (CR-Tx) and
a CR receiver (CR-Rx) shares a narrow-band for transmission with one
PR link consisting of a PR transmitter (PR-Tx) and a PR receiver
(PR-Rx). All the terminals involved are assumed to be each equipped
with a single antenna. We assume a block-fading channel model for
all the channels shown in Fig. \ref{fig:system model}. We also
assume coherent communication for both the PR and CR links and thus
only the fading channel power gain (amplitude square) is of
interest. In addition, since the proposed study in this paper
applies to any particular channel fading state, for notational
brevity, we drop the channel fading state index for the following
definitions. Denote $h_c$, $h_p$, $h_{cp}$, and $h_{pc}$ as the
power gains of the channels from CR-Tx to CR-Rx, from PR-Tx to
PR-Rx, from CR-Tx to PR-Rx, and from PR-Tx to CR-Rx, respectively.
In addition, denote $\tilde{h}_{pc}$ as the channel power gain from
PR-Tx to CR-Tx. Without loss of generality, it is assumed that the
additive noises at both PR-Rx and CR-Rx are independent circularly
symmetric complex Gaussian (CSCG) random variables with zero mean
and variances denoted by $\sigma_p^2$ and $\sigma_c^2$,
respectively.

First, consider the PR link. It is assumed that the PR is oblivious
to the existence of the CR and treats the interference from CR-Tx as
additional noise at the receiver. We assume that the PR employs
certain form of transmit power and/or rate adaptations based upon
the PR CSI as well as the interference power level received from the
CR. Let $N_p$ denote the noise-plus-interference power level at
PR-Rx, i.e., $N_p=\sigma_p^2+h_{cp}p_c$, with $p_c$ denoting the
transmit power of the CR. The PR transmit power, denoted by $p_p$,
is then given by $\mathcal{P}_p(\gamma_p)$, which defines a mapping
from the PR ``effective'' channel power gain, $\gamma_p=h_p/N_p$, to
$p_p$. The PR is assumed to employ packet-based transmissions and
the transmit rate of one particular packet is denoted by $r_p$. For
a given pair of $\gamma_p$ and $p_p$, $r_p$ is assumed equal to
$\mathcal{R}_p(SNR_p)$, with $SNR_p=\gamma_pp_p$ denoting the
signal-to-noise (including both the additive noise and CR
interference) ratio (SNR) at PR-Rx. Note that the rate function
$\mathcal{R}_p(SNR_p)$ is specified by the employed modulation and
coding scheme (MCS) of the PR link.

Next, consider the CR link. The CR is assumed to be aware of the PR,
and furthermore protect the PR transmission by ensuring that the
resultant performance loss of the PR due to the CR interference is
within a tolerable margin. However, we consider a practical scenario
where there is no dedicated communication channel for the PR to send
any side information (e.g., $h_{cp}$) to the CR for facilitating its
interference control to the PR. Consequently, the CR needs to fulfil
the task of protecting the PR by its own effort. In this case, one
possible method for the CR is to deploy spectrum sensing techniques
to detect the PR on-off status, and then transmit if the sensing
result indicates that the PR is not transmitting with a high
probability (i.e., OSA-based CRs). In contrast, this paper studies
more efficient methods for the CR to utilize the PR spectrum than
sensing-based orthogonal transmission, where the CR manages to
transmit even when the PR is transmitting over the same band (i.e.,
SS-based CRs).

\section{Hidden PR Feedback}\label{sec:PR feedback}

In this section, we illustrate the phenomenon of hidden PR feedback.
First, consider for the PR link the following three commonly adopted
power control policies in wireless communication:
\begin{itemize}
\item {\it Constant Power (CP) Policy}:
$\mathcal{P}_p(\gamma_p)=Q, \forall \gamma_p\geq 0$, where $Q$ is a
constant;
\item {\it Persistent Power Control Policy}: $\mathcal{P}_p(\gamma_p^{(2)})\geq
\mathcal{P}_p(\gamma_p^{(1)})$, for any $0<\gamma_p^{(2)}<
\gamma_p^{(1)}$;
\item {\it Non-Persistent Power Control Policy}: $\mathcal{P}_p(\gamma_p^{(2)})\leq
\mathcal{P}_p(\gamma_p^{(1)})$, for any $0<\gamma_p^{(2)}<
\gamma_p^{(1)}$.
\end{itemize}

The CP policy is usually applied when PR-Tx has a strict peak power
constraint given by $Q$ over all transmitted packets, while the
other two policies are applicable when PR-Tx is subject to an
average power constraint and thus can change transmit powers over
different packets. Note that with the {\it persistent} power
control, $p_p$ usually increases when the effective channel power
gain, $\gamma_p$, decreases. This type of power control is usually
applied for data traffic with a stringent quality-of-service (QoS)
requirement in terms of receiver SNR, $SNR_p=\gamma_pp_p$. One
well-known example  in the literature for the persistent power
control is the so-called {\it truncated channel inversion} (TCI)
\cite{Goldsmith97},\footnote{Strictly speaking, TCI is
non-persistent only for the regime of $\gamma_p>\gamma_p^{(T)}$.
Alternatively, TCI is non-persistent for all values of $\gamma_p$ in
the special case of $\gamma_p^{(T)}=0$, where TCI reduces to the
conventional {\it channel inversion} power control
\cite{Goldsmith97}.} which is expressed as
\begin{align}\label{eq:TCI}
p_p^{\rm TCI}=\left\{\begin{array}{ll} \frac{SNR_p^{(T)}}{\gamma_p}
& ~~ {\rm if}~ \gamma_p>\gamma_p^{(T)} \\ 0 & ~~ {\rm otherwise}
\end{array} \right.
\end{align}
where $SNR_p^{(T)}$ is the given SNR target, while $\gamma_p^{(T)}$
is the threshold for $\gamma_p$ below which the PR decides to take a
``transmit outage'', i.e., $p_p=0$ and thus $r_p=0$.
$\gamma_p^{(T)}$ can be determined from the PR average transmit
power constraint and is related to the PR outage probability
\cite{Goldsmith97} (details are omitted here for brevity). With the
TCI power control, the PR transmits with a constant rate
$r_p=\mathcal{R}_p(SNR_p^{(T)})$ if $\gamma_p\geq\gamma_p^{(T)}$.

In contrast, with the {\it non-persistent} power control, the PR
usually decreases its transmit power when $\gamma_p$ decreases, in
order to save transmit powers for better opportunities with larger
values of $\gamma_p$. One well-known example for the non-persistent
power control is the so-called {\it water-filling} (WF)
\cite{Goldsmith97} policy, which is given by
\begin{align}\label{eq:WF}
p_p^{\rm WF}=\left\{\begin{array}{ll} \mu-\frac{1}{\gamma_p} & ~~
{\rm if}~ \gamma_p>\frac{1}{\mu} \\ 0 & ~~ {\rm otherwise}
\end{array} \right.
\end{align}
where $\mu$ is a constant, or the so-called ``water-level'', which
can be determined from the PR average transmit power constraint
\cite{Goldsmith97} (details are omitted here). The WF power control
results in a variable-rate transmission for the PR, where
$r_p=\mathcal{R}_p(\gamma_p\mu-1)$ if $\gamma_p> (1/\mu)$; and
$r_p=0$ otherwise.

From the above discussions, it is observed that $p_p$ and/or $r_p$
may vary with the values of $\gamma_p$. Since
$\gamma_p=h_p/(\sigma_p^2+h_{cp}p_c)$ for a given fading state with
fixed channel power gains $h_p$ and $h_{cp}$, it follows that
$\gamma_p$ is solely determined by transmit power of the CR signal,
$p_c$. More specifically, we can express $p_p$ and $r_p$ in terms of
$p_c$ for CP, TCI, and WF power control of the PR as follows.
\begin{align}
p_p^{\rm CP}&=Q.
\label{eq:power CP} \\
r_p^{\rm
CP}&=\mathcal{R}_p\left(\frac{h_pQ}{\sigma_p^2+h_{cp}p_c}\right).
\label{eq:rate CP}
\end{align}

\begin{align}
p_p^{\rm TCI}&=\left\{\begin{array}{ll}
\frac{SNR_p^{(T)}(\sigma_p^2+h_{cp}p_c)}{h_p} & ~~ {\rm if}~
p_c<\left(\frac{h_p}{\gamma_p^{(T)}}-\sigma_p^2\right)\frac{1}{h_{cp}}
\\ 0 & ~~ {\rm otherwise}.
\end{array} \right.\label{eq:power TCI} \\
r_p^{\rm TCI}&= \left\{\begin{array}{ll} \mathcal{R}_p(SNR_p^{(T)})
& ~~ {\rm if}~
p_c<\left(\frac{h_p}{\gamma_p^{(T)}}-\sigma_p^2\right)\frac{1}{h_{cp}}
\\ 0 & ~~ {\rm otherwise}.\end{array} \right. \label{eq:rate TCI}
\end{align}

\begin{align}
p_p^{\rm WF}&=\left\{\begin{array}{ll}
\mu-\frac{\sigma_p^2+h_{cp}p_c}{h_p} & ~~ {\rm if}~ p_c<\frac{\mu
h_p-\sigma_p^2}{h_{cp}}
\\ 0 & ~~ {\rm otherwise}.
\end{array} \right.\label{eq:power WF} \\
r_p^{\rm WF}&= \left\{\begin{array}{ll} \mathcal{R}_p(\frac{\mu
h_p}{\sigma_p^2+h_{cp}p_c}-1) & ~~ {\rm if}~ p_c<\frac{\mu
h_p-\sigma_p^2}{h_{cp}}
\\ 0 & ~~ {\rm otherwise}.\end{array} \right. \label{eq:rate WF}
\end{align}

In Fig. \ref{fig:PR rate power}, $p_p$ and $r_p$ are plotted as
functions of $p_c$, for the CP, TCI (assuming
$h_p>\sigma_p^2\gamma_p^{(T)}$), and WF (assuming
$h_p>\sigma_p^2/\mu$) power control of the PR, respectively. For the
purpose of illustration, in this example we assume that
$\mathcal{R}_p(SNR_p)=\log_2(1+SNR_p)$, which holds when the optimal
Gaussian codebook is used by the PR with interference from the CR
treated as additive Gaussian noise. As observed, by interfering with
the PR with $p_c>0$, the CR is usually able to make the PR change
its transmit power and/or rate for all considered PR power control
policies. As a result, the corresponding changes occur in the
received PR signal power, $\tilde{h}_{pc}p_p$, and/or rate, $r_p$,
at CR-Tx. Therefore, there exists a {\it hidden} PR power and/or
rate feedback observable by the CR, which is activated by the CR via
initiatively interfering with the PR. In the following, we will
apply this hidden PR feedback phenomenon to design new learning and
transmission schemes for the CR.

\section{Active Learning}\label{sec:learning}

In this section, we apply the hidden PR feedback to design CR active
learning with the goal of estimating the channel power gain from
CR-Tx to PR-Rx, $h_{cp}$, which is essential for the CR to control
the interference to the PR during data transmission as discussed
later in Section \ref{sec:CR transmission}. First, we present the
proposed scheme for the ideal case with a number of assumptions
made. Then, we present a protocol for the CR to implement the
proposed scheme and the solutions to deal with important issues for
implementation with relaxed assumptions.

\subsection{CR-to-PR Channel Gain Estimation}

In this subsection, we propose a new scheme for CR-Tx to estimate
$h_{cp}$ via active learning (i.e., without the need of a feedback
channel from PR-Rx) under certain assumptions listed as follows.
\begin{itemize}
\item The CR knows the PR transmission protocol and is able to synchronize its operation with the PR transmission.

\item In the case where the CR needs to extract rate information from the received PR
signal, this can be done by the CR via certain techniques.
Furthermore, the PR transmit rate, $\mathcal{R}_p(SNR_p)$, is a
continuously increasing function of the receiver SNR, $SNR_p$, and
this function is known to the CR.

\item In the case where the CR needs to estimate the received signal
power from the PR, the effect of the receiver noise on the power
estimation is ignored.

\item During the period for the proposed scheme to be implemented, all
the channels involved in Fig. \ref{fig:system model} remain
constant.
\end{itemize}
The above assumptions will be relaxed in the next subsection where
implementation issues for the proposed scheme are addressed.

Next, we present the scheme to estimate $h_{cp}$ as follows. Suppose
that initially CR-Tx listens to the PR transmission,\footnote{In
practice, either CR-Tx or CR-Rx can observe the signal power and/or
rate from PR-Tx to estimate $h_{cp}$ using the method presented in
this paper, while the one between them that has a superior channel
quality from PR-Tx is more suitable for this task. For simplicity,
this paper assumes that this task is done by CR-Tx.}  and observes
the received signal power and rate from PR-Tx, represented by
$q_p^{(0)}=\tilde{h}_{pc}p_p^{(0)}$ and
$r_p^{(0)}=\mathcal{R}_p(\gamma_p^{(0)}p_p^{(0)})$, respectively,
with $p_p^{(0)}$ denoting the initial transmit power of the PR and
$\gamma_p^{(0)}=h_p/\sigma_p^2$. Next, CR-Tx broadcasts a probing
signal of power $p_c$, and PR-Rx reacts upon receiving the
interference from CR-Tx by sending back to PR-Tx (via a dedicated
feedback channel for the PR link) a control signal to indicate
transmit power and/or rate adaptation. Accordingly, PR-Tx resets
transmit power and rate to be $p_p^{(1)}$ and $r_p^{(1)}$,
respectively, where $p_p^{(1)}$ depends on the employed power
control policy $\mathcal{P}_p$ of the PR and
$r_p^{(1)}=\mathcal{R}_p(\gamma_p^{(1)}p_p^{(1)})$ with
$\gamma_p^{(1)}=h_p/(\sigma_p^2+p_ch_{cp})$. As a result, CR-Tx
observes the updated power received from PR-Tx,
$q_p^{(1)}=\tilde{h}_{pc}p_p^{(1)}$, and the updated transmit rate
of the PR, $r_p^{(1)}$. Under the aforementioned assumptions,
$q_p^{(0)}$, $r_p^{(0)}$, $q_p^{(1)}$, and $r_p^{(1)}$ are all
perfectly observed by CR-Tx.

Without loss of generality, it can be assumed that in the above
proposed scheme, $p_p^{(0)}> 0$ and thus $q_p^{(0)}> 0$. This is so
because if $p_p^{(0)}=0$, the PR does not transmit initially, and
thus the CR can simply transmit as if the PR is not present and the
estimation of $h_{cp}$ becomes unnecessary in this case.
Furthermore, note that if $p_p^{(0)}> 0$, there always exists a
non-trivial interval of $p_c$ for which $p_p^{(1)}> 0$. This is
obvious with e.g., CP policy of the PR since $p_p^{(1)}=Q$
regardless of $p_c$, while with TCI power control, from
(\ref{eq:TCI}) it follows that $p_p^{(0)}> 0$ implies that
$\frac{h_p}{\gamma_p^{(T)}}>\sigma_p^2$ and thus $p_p^{(1)}>0$
provided that $p_c<(\frac{h_p}{\gamma_p^{(T)}}-\sigma_p^2)/h_{cp}$;
and with WF power control, from (\ref{eq:WF}) it follows that
$p_p^{(0)}>0$ implies that $\mu h_p>\sigma_p^2$ and thus
$p_p^{(1)}>0$ provided that $p_c<\frac{\mu h_p-\sigma_p^2}{h_{cp}}$.
Thus, without loss of generality, we can also assume that
$q_p^{(1)}>0$ (if not, the CR can re-probe the PR with a smaller
power $p_c$). Consequently, $r_p^{(0)}>0$ and $r_p^{(1)}>0$.

Note that the observed $r_p^{(1)}$ contains side information on
$h_{cp}$ to be estimated via the term $\gamma_p^{(1)}$. However,
$h_{cp}$ cannot be determined solely from $r_p^{(1)}$ since other
relevant terms, $h_p$, $\sigma_p^2$, and $p_p^{(1)}$ are unknown to
the CR. Interestingly, CR-Tx can determine $h_{cp}/\sigma_p^2$ from
the observed $q_p^{(0)}$, $r_p^{(0)}$, $q_p^{(1)}$ and $r_p^{(1)}$,
and the probing signal power $p_c$, as shown in the following
proposition.
\begin{proposition}\label{proposition:channel estimation}
Assuming that $q_p^{(0)}$, $r_p^{(0)}$, $q_p^{(1)}$, and $r_p^{(1)}$
are all strictly positive, the channel power gain from CR-Tx to
PR-Rx $h_{cp}$ normalized to the noise power at PR-Rx $\sigma_p^2$
can be estimated as
\begin{equation}\label{eq:channel gain estimate}
\frac{h_{cp}}{\sigma_p^2}=\left(\frac{\mathcal{R}^{-1}_p(r_p^{(0)})q_p^{(1)}}{\mathcal{R}^{-1}_p(r_p^{(1)})q_p^{(0)}}-1\right)
\frac{1}{p_c}
\end{equation}
where $\mathcal{R}^{-1}_p(\cdot)$ denotes the inverse function of
$\mathcal{R}_p(\cdot)$.
\end{proposition}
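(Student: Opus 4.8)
The plan is to substitute the physical definitions of the four observed quantities into the right-hand side of \eqref{eq:channel gain estimate} and verify that it collapses to $h_{cp}/\sigma_p^2$. The crucial enabling fact is the assumption that $\mathcal{R}_p$ is a continuously increasing function known to the CR, so that $\mathcal{R}^{-1}_p$ exists and applying it to an observed rate returns the corresponding receiver SNR. Concretely, since $r_p^{(0)}=\mathcal{R}_p(\gamma_p^{(0)}p_p^{(0)})$ with $\gamma_p^{(0)}=h_p/\sigma_p^2$, we get $\mathcal{R}^{-1}_p(r_p^{(0)})=h_p p_p^{(0)}/\sigma_p^2$; and since $r_p^{(1)}=\mathcal{R}_p(\gamma_p^{(1)}p_p^{(1)})$ with $\gamma_p^{(1)}=h_p/(\sigma_p^2+p_c h_{cp})$, we get $\mathcal{R}^{-1}_p(r_p^{(1)})=h_p p_p^{(1)}/(\sigma_p^2+p_c h_{cp})$. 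The hypothesis $r_p^{(0)},r_p^{(1)}>0$ guarantees both inverse values are strictly positive, so they may serve as denominators.

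First I would take the ratio of the two inverse-rate terms: the unknown PR-Tx-to-PR-Rx gain $h_p$ cancels, leaving $\mathcal{R}^{-1}_p(r_p^{(0)})/\mathcal{R}^{-1}_p(r_p^{(1)}) = \bigl(p_p^{(0)}/p_p^{(1)}\bigr)\bigl((\sigma_p^2+p_c h_{cp})/\sigma_p^2\bigr)$. Next I would form the received-power ratio $q_p^{(1)}/q_p^{(0)} = (\tilde{h}_{pc}p_p^{(1)})/(\tilde{h}_{pc}p_p^{(0)}) = p_p^{(1)}/p_p^{(0)}$, in which the unknown PR-Tx-to-CR-Tx gain $\tilde{h}_{pc}$ cancels; here the assumption $q_p^{(0)}>0$ (equivalently $p_p^{(0)}>0$) is exactly what makes this ratio well defined. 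Multiplying the two ratios, the PR transmit powers $p_p^{(0)}$ and $p_p^{(1)}$ also cancel, and I am left with $(\sigma_p^2+p_c h_{cp})/\sigma_p^2 = 1 + p_c h_{cp}/\sigma_p^2$. Subtracting $1$ and dividing by $p_c>0$ yields precisely \eqref{eq:channel gain estimate}.

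There is no real analytic obstacle: the proposition is an algebraic identity whose content is that the nuisance parameters $h_p$, $\tilde{h}_{pc}$, and the pair $(p_p^{(0)},p_p^{(1)})$ all cancel pairwise once the rate and power measurements are combined in the indicated way — which is why two observables (power and rate) at each of the two probing levels suffice to isolate the single ratio $h_{cp}/\sigma_p^2$. The only points requiring care are (i) invoking monotonicity of $\mathcal{R}_p$ so that $\mathcal{R}^{-1}_p$ is single-valued, and (ii) using the strict-positivity hypotheses so that $\mathcal{R}^{-1}_p$ is evaluated on its range and no division by zero occurs. It is worth remarking afterward that active learning recovers only the \emph{normalized} gain $h_{cp}/\sigma_p^2$ rather than $h_{cp}$ itself; however, this normalized quantity is precisely what governs the PR effective channel gain $\gamma_p=h_p/(\sigma_p^2+h_{cp}p_c)$, and hence suffices for the CR interference control studied in Section~\ref{sec:CR transmission}.
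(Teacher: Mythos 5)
Your proof is correct and follows essentially the same route as the paper's: both exploit the cancellation of $\tilde{h}_{pc}$ in the received-power ratio $q_p^{(1)}/q_p^{(0)}=p_p^{(1)}/p_p^{(0)}$ and the cancellation of $h_p$ in the ratio $\mathcal{R}^{-1}_p(r_p^{(0)})/\mathcal{R}^{-1}_p(r_p^{(1)})$, then combine the two to isolate $1+p_c h_{cp}/\sigma_p^2$. The only difference is cosmetic (you multiply the two ratios directly, whereas the paper equates two expressions for $p_p^{(0)}/p_p^{(1)}$), and your added remarks on invertibility and strict positivity are consistent with the paper's hypotheses.
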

\begin{proof}
Since
\begin{equation}\label{eq:equality 1}
\frac{q_p^{(0)}}{q_p^{(1)}}=\frac{\tilde{h}_{pc}p_p^{(0)}}{\tilde{h}_{pc}p_p^{(1)}}=\frac{p_p^{(0)}}{p_p^{(1)}}
\end{equation}
and from the expressions of $r_p^{(0)}$ and $r_p^{(1)}$, it follows
that
\begin{align}
\frac{p_p^{(0)}}{p_p^{(1)}}&=\frac{\mathcal{R}^{-1}_p(r_p^{(0)})\gamma_p^{(1)}}{\mathcal{R}^{-1}_p(r_p^{(1)})\gamma_p^{(0)}}
\label{eq:equality 3}
\\ &=\frac{\mathcal{R}^{-1}_p(r_p^{(0)})\frac{h_p}{\sigma_p^2+p_ch_{cp}}}{\mathcal{R}^{-1}_p(r_p^{(1)}){\frac{h_p}{\sigma_p^2}}} \\
&=\frac{\mathcal{R}^{-1}_p(r_p^{(0)})}{\mathcal{R}^{-1}_p(r_p^{(1)})(1+\frac{p_ch_{cp}}{\sigma_p^2})}.
\label{eq:equality 2}
\end{align}
Using (\ref{eq:equality 1}) and (\ref{eq:equality 2}),
(\ref{eq:channel gain estimate}) can be obtained.
\end{proof}

We see that Proposition \ref{proposition:channel estimation} is
mainly based upon the ``hidden'' equation in (\ref{eq:equality 3}),
which is due to the PR transmit self-adaptation upon receiving the
interference from the CR. Note that the method given in Proposition
\ref{proposition:channel estimation} applies to any general PR
transmit power/rate adaptation strategy, provided that at least one
of the PR transmit power and rate is changed after receiving
interference from the CR. In the two special cases of CP and TCI
power control policies for the PR, for which
$q_p^{(1)}=q_p^{(0)}=\tilde{h}_{pc}Q$ and
$r_p^{(1)}=r_p^{(0)}=\mathcal{R}_p(SNR_p^{(T)})$, respectively, it
easily follows that the estimation rule in (\ref{eq:channel gain
estimate}) reduces to
\begin{align}
\frac{h_{cp}^{\rm
CP}}{\sigma_p^2}&=\left(\frac{\mathcal{R}^{-1}_p(r_p^{(0)})}{\mathcal{R}^{-1}_p(r_p^{(1)})}-1\right)
\frac{1}{p_c}\label{eq:channel gain estimate CP} \\
\frac{h_{cp}^{\rm
TCI}}{\sigma_p^2}&=\left(\frac{q_p^{(1)}}{q_p^{(0)}}-1\right)\frac{1}{p_c}.\label{eq:channel
gain estimate TCI}
\end{align}
Therefore, only rate/power adaptation of the PR needs to be observed
by the CR for the estimation of $h_{cp}/\sigma_p^2$ in the case of
CP/TCI power control for the PR.

Note that the proposed new method for the CR to estimate $h_{cp}$
works in both cases of TDD and FDD modes for the PR. For comparison,
consider the conventional method where CR-Tx estimates $h_{cp}$ from
the received signal power from PR-Rx (when it transmits), denoted by
$\hat{q}_p=g_{pc}\hat{p}_p$, with $g_{pc}$ denoting the channel
power gain from PR-Rx to CR-Tx and $\hat{p}_p$ denoting the
instantaneous transmit power of PR-Rx. In contrast, the proposed
method estimates $h_{cp}$ at either CR-Tx or CR-Rx based on the
received signals from PR-Tx. There are three major advantages of the
proposed method over the conventional method. First, for the
conventional method, even in the case of PR TDD mode where channel
reciprocity holds such that $g_{pc}=h_{cp}$, $h_{cp}$ can be
estimated only if $\hat{p}_p$ is known at CR-Tx, which may not hold
in practice. In contrast, from (\ref{eq:channel gain estimate}) it
is observed that the proposed method does not rely on the knowledge
of PR transmit power. Second, the assumption $g_{pc}=h_{cp}$ for the
conventional method becomes problematic if FDD mode is used for the
PR, since $g_{pc}$ and $h_{cp}$ now correspond to two different
frequency bands and are thus different in general. In contrast, the
proposed method works independent of the relationship between
$g_{pc}$ and $h_{cp}$. Third, the conventional method may estimate
$h_{cp}$ but cannot give any information on the noise power at
PR-Rx, $\sigma_p^2$; as a result, CR-Tx cannot predict its resulting
interference power level at PR-Rx relative to $\sigma_p^2$. In
contrast, the proposed method provides the direct estimate on
$h_{cp}/\sigma_p^2$.

\subsection{Implementation}\label{subsec:implementation}

In this subsection, we address various implementation issues for the
proposed active learning scheme. First, we present the transmission
protocols for the PR and CR as follows.

\begin{itemize}
\item {\it PR Transmission Protocol}: We consider the conventional pilot-training-based transmission
protocol for the PR, where the transmission of PR-Tx is divided into
orthogonal time blocks, each of which is further divided into two
sub-blocks: one contains the training signal and the other contains
the data signal, as shown in Fig. \ref{fig:protocol}(a). The
training signal is for PR-Rx to estimate the PR channel $h_p$ as
well as the received noise power $N_p=\sigma_p^2+h_{cp}p_c$
(including the received CR interference power if $p_c>0$). It is
assumed that these estimates are perfect since in this paper we
focus on the deign of CR transmission. Based on the estimated $h_p$
and $N_p$, PR-Rx computes the effective channel power gain
$\gamma_p=h_p/N_p$, and according to $\gamma_p$ designs a feedback
signal for PR-Tx to adapt its transmit power and/or rate for the
next block transmission (for simplicity, we assume that there is no
delay or error for the PR feedback).

\item {\it CR Transmission Protocol}: As shown in Fig.
\ref{fig:protocol}(b), the transmission protocol for the CR is more
sophisticated than the conventional pilot-training-based one for the
PR. Specifically, each CR block transmission consists of four
stages: initial sensing, probing, re-sensing, and data transmission.
For initial sensing, CR-Tx observes the received PR signal power
$q_p^{(0)}$ and/or rate $r_p^{(0)}$. Then, in the probing stage,
CR-Tx transmits a predesigned signal of power $p_c$ to interfere
with PR-Rx. The probing signal of CR-Tx can also be used as the
training signal for CR-Rx. After that, CR-Tx goes into the
re-sensing stage to observe the updated PR signal power $q_p^{(1)}$
and/or rate $r_p^{(1)}$, and estimates $h_{cp}/\sigma_p^2$ according
to the rule given in (\ref{eq:channel gain estimate}). Last, based
on the estimated channel and the observed PR transmit adaptations,
CR-Tx sets its transmit power and rate (details are given later in
Section \ref{sec:CR transmission}), and starts data transmission.
\end{itemize}

Next, we discuss the following important issues for implementing the
above CR transmission protocol based on active learning.

\subsubsection{Time Synchronization}

One important issue for the proposed scheme is the timing
discrepancy between the distributed PR and CR links due to the lack
of a common reference clock. Let $\tau_p$, $\tau_{pc}$, and
$\tau_{cp}$ denote the propagation delays from PR-Tx to PR-Rx, from
PR-Tx to CR-Tx, and from CR-Tx to PR-Rx, respectively, with
$\tau_p\leq (\tau_{pc}+\tau_{cp})$. In addition, let $s_p(t)$ denote
the transmitted signal from PR-Tx. Then, the received signals at
PR-Rx and CR-Tx are $s_p(t-\tau_p)$ and $s_p(t-\tau_{pc})$ (the
channel multiplicative effect is ignored here since it is irrelevant
to the discussion on time synchronization), respectively. Since
CR-Tx does not have a common clock with PR-Tx, it has to use the
received signal from PR-Tx as a reference clock. Hence, the
transmitted probing signal from CR-Tx can be denoted as
$s_c(t-\tau_{pc}+\Delta)$, where $\Delta>0$ denotes the transmission
time ahead of the reference clock (to be specified later).
Accordingly, the received probing signal at PR-Rx is
$s_c(t-\tau_{pc}+\Delta-\tau_{cp})$. Note that CR-Tx needs to make
sure that its probing signal arrives at PR-Rx prior to the PR
training signal in one particular transmission block, i.e.,
$\tau_{pc}-\Delta+\tau_{cp}\leq \tau_p$, to make an effective
probing. Thus, it follows that $\Delta\geq
\tau_{pc}+\tau_{cp}-\tau_p>0$. However, the exact values of
$\tau_p$, $\tau_{pc}$, and $\tau_{cp}$ may not be known to CR-Tx.
Instead, suppose that we know that the maximum propagation delay
between CR and PR terminals is less than $\tau_{\max}$. Then, by
setting $\Delta=2\tau_{\max}$, it is ensured that the CR probing
signal arrives at PR-Rx prior to the PR training signal.

On the other hand, the duration of the probing signal from CR-Tx,
denoted by $T_c$, also needs to be properly designed. Note that in
order to minimize the temporary performance degradation of the PR
link due to the CR probing signal, it is desirable to choose a small
value for $T_c$. However, for the probing signal to be effective, it
is also necessary to make $T_c$ sufficiently large such that the
probing signal can overlap with the entire training signal of the PR
at PR-Rx in one particular transmission block. Let $T_p$ denote the
training signal duration of the PR, which is assumed known at CR-Tx.
From the earlier discussion on time synchronization, we know that
PR-Rx observes the PR signal, $s_p(t-\tau_{p})$, and CR probing
signal, $s_c(t-\tau_{pc}+2\tau_{\max}-\tau_{cp})$. Thus, the maximal
gap for the arrival time of the CR probing signal ahead of that of
the PR training signal is $2\tau_{\max}$ when
$\tau_p=(\tau_{pc}+\tau_{cp})$. Therefore, by setting
$T_c=T_p+2\tau_{\max}$, the aforementioned requirements for choosing
$T_c$ are both fulfilled.

\subsubsection{Rate Granularity}

In the estimation rule given by (\ref{eq:channel gain estimate}), it
has been assumed that the transmit rate of the PR,
$\mathcal{R}_p(SNR_p)$, is a continuous function of receiver SNR,
$SNR_p$. However, with practical MCSs, $\mathcal{R}_p(SNR_p)$ is
usually a non-decreasing function of $SNR_p$ with a finite rate
granularity, i.e., constituting only a finite number of discrete
rate values. In this case, suppose that
$\mathcal{R}_p(SNR_p^{(i)})=r_p^{(i)}$, with $0<SNR_L^{(i)}\leq
SNR_p^{(i)}< SNR_U^{(i)}$, $i=0,1$, where $r_p^{(i)}$ denotes a
discrete rate value, and $SNR_L^{(i)}$ and $SNR_U^{(i)}$ are
corresponding SNR thresholds. In this case, although the CR cannot
determine the exact value of $h_{cp}/\sigma_p^2$ from
(\ref{eq:channel gain estimate}), it can safely estimate the range
of this value as
\begin{equation}\label{eq:channel gain estimate bounds}
\left(\frac{SNR_L^{(0)}q_p^{(1)}}{SNR_U^{(1)}q_p^{(0)}}-1\right)\frac{1}{p_c}\leq
\frac{h_{cp}}{\sigma_p^2}\leq
\left(\frac{SNR_U^{(0)}q_p^{(1)}}{SNR_L^{(1)}q_p^{(0)}}-1\right)\frac{1}{p_c}.
\end{equation}

\subsubsection{Power Measurement Noise}

Another assumption we have made on the estimation using
(\ref{eq:channel gain estimate}) is that the sensor noise at CR-Tx
is ignored for estimating the received PR signal powers, $q_p^{(0)}$
and $q_p^{(1)}$, before and after the CR probing. In practice, only
a finite number of PR signal samples can be obtained during the
initial sensing and re-sensing periods at CR-Tx, which are corrupted
by the receiver noise. For convenience, we assume that the noise
power at CR-Tx is $\sigma_c^2$, the same as that at CR-Rx, and
$\sigma_c^2$ is known to CR-Tx. Also assume that $M$ independent
signal samples are obtained during both the initial sensing and
re-sensing periods at CR-Tx, denoted by
$\tilde{s}_p^{(i)}(1),\ldots,\tilde{s}_p^{(i)}(M)$, $i=0,1$.
Specifically, we have
\begin{equation}
\tilde{s}_p^{(i)}(m)=s_p^{(i)}(m)+\nu^{(i)}(m), ~ m=1,\ldots,M
\end{equation}
where $s_p^{(i)}(m)$ denotes the PR signal component, with
$\frac{1}{M}\sum_{m=1}^M|s_p^{(i)}(m)|^2\simeq q_p^{(i)}, i=0,1$,
and $\nu^{(i)}(m)$'s are independent Gaussian noises with zero mean
and variance of $\sigma_c^2$. Instead of having the exact values for
$q_p^{(0)}$ and $q_p^{(1)}$, we can obtain their estimated values as
follows.
\begin{equation}
\hat{q}_p^{(i)}=\frac{1}{M}\sum_{m=1}^M|\tilde{s}_p^{(i)}(m)|^2-\sigma_c^2,
~i=0,1.
\end{equation}

According to the central limit theorem \cite{Papoulis}, if the
number of samples $M$ is large enough (e.g., $\geq 10$ in practice),
the above estimation statistics are asymptotically normally
distributed with corresponding mean
\begin{equation}
\mathtt{E}(\hat{q}_p^{(i)})=q_p^{(i)}, ~i=0,1
\end{equation}
and variance
\begin{equation}
c^{(i)}:=\mathtt{Var}(\hat{q}_p^{(i)})=\frac{2\sigma_c^2(\sigma_c^2+2q_p^{(i)})}{M},
~i=0,1.
\end{equation}
Since $q_p^{(i)}$'s are unknown at CR-Tx, the exact values of
$c^{(i)}$'s are not available at CR-Tx. However, if it is known that
the PR transmit powers must be below a prescribed maximum value,
denoted by $P_{\max}$, the upper bounds for $c^{(i)}$'s can be
obtained as
\begin{equation}\label{eq:define c}
c^{(i)}\leq \frac{2\sigma_c^2(\sigma_c^2+2P_{\max})}{M}:=\hat{c},
~i=0,1.
\end{equation}
Thus, it follows that
\begin{align}
\mathtt{Prob}\left(\hat{q}_p^{(1)}\leq \left(q_p^{(1)}-
\zeta\sqrt{\hat{c}}\right)\right)\leq&~
\mathtt{Prob}\left(\hat{q}_p^{(1)}\leq \left(q_p^{(1)}-
\zeta\sqrt{c^{(1)}}\right)\right)\\ =&~ Q(\zeta)
\end{align}
where $Q(\cdot)$ is the complementary cumulative distribution
function \cite{Papoulis}, and $\zeta>0$ is a design parameter.
Similarly, we have
\begin{align}
\mathtt{Prob}\left(\hat{q}_p^{(0)}\geq \left(q_p^{(0)}+
\zeta\sqrt{\hat{c}}\right)\right)\leq Q(\zeta).
\end{align}
In other words, we have a belief in probability of at least
$1-Q(\zeta)$ for $\hat{q}_p^{(1)}> \left(q_p^{(1)}-
\zeta\sqrt{\hat{c}}\right)$ and $\hat{q}_p^{(0)}<
\left(q_p^{(0)}+\zeta\sqrt{\hat{c}}\right)$. Accordingly, from
(\ref{eq:channel gain estimate}), it follows that with a probability
of at least $1-Q(\zeta)$
\begin{equation}\label{eq:channel gain estimate new UB}
\frac{h_{cp}}{\sigma_p^2}\leq\left(\frac{\mathcal{R}^{-1}_p(r_p^{(0)})\left(\hat{q}_p^{(1)}+\zeta\sqrt{\hat{c}}\right)}
{\mathcal{R}^{-1}_p(r_p^{(1)})\left(\hat{q}_p^{(0)}-\zeta\sqrt{\hat{c}}\right)}-1\right)
\frac{1}{p_c}.
\end{equation}
Similarly, with the same probability guarantee, we have
\begin{equation}\label{eq:channel gain estimate new LB}
\frac{h_{cp}}{\sigma_p^2}\geq\left(\frac{\mathcal{R}^{-1}_p(r_p^{(0)})\left(\hat{q}_p^{(1)}-\zeta\sqrt{\hat{c}}\right)}
{\mathcal{R}^{-1}_p(r_p^{(1)})\left(\hat{q}_p^{(0)}+\zeta\sqrt{\hat{c}}\right)}-1\right)
\frac{1}{p_c}.
\end{equation}
Note that in (\ref{eq:channel gain estimate new UB}) and
(\ref{eq:channel gain estimate new LB}), we have assumed that
$\hat{q}_p^{(0)}>\zeta\sqrt{\hat{c}}$ and
$\hat{q}_p^{(1)}>\zeta\sqrt{\hat{c}}$, respectively. Thus, even with
a finite number of observation samples corrupted by additive noises,
CR-Tx can still obtain a pair of upper and lower bounds on
$h_{cp}/\sigma_p^2$ with a large belief probability (by setting a
sufficiently large value for $\zeta$). However, if the chosen
$\zeta$ is too large, it also increases the uncertainty range for
the estimation.

\subsubsection{Channel Variation}
Last, we address the issue on possible channel variations during the
implementation of the proposed CR active learning scheme. It is
worth noting that the assumption of constant channels has usually
been made in prior works (see, e.g., \cite{Foschini,Yates,Yu02}) on
iterative user power/rate adaptations in decentralized multiuser
systems. From the proof of Proposition \ref{proposition:channel
estimation}, we see that if the channel power gain,
$\tilde{h}_{pc}$, through which CR-Tx estimates the received signal
powers $q_p^{(0)}$ and $q_p^{(1)}$ from PR-Tx, changes from the
initial sensing stage to the re-sensing stage, the estimation result
will get affected. Let $\tilde{h}_{pc}^{(0)}$ and
$\tilde{h}_{pc}^{(1)}$ denote the true values of $\tilde{h}_{pc}$
during the initial sensing and re-sensing periods, respectively. We
can rewrite the estimation rule in (\ref{eq:channel gain estimate})
as (assuming the perfect rate and power estimation)
\begin{equation}\label{eq:estimate channel variation}
\frac{h_{cp}}{\sigma_p^2}=\left(\frac{\mathcal{R}^{-1}_p(r_p^{(0)})q_p^{(1)}\tilde{h}_{pc}^{(0)}}
{\mathcal{R}^{-1}_p(r_p^{(1)})q_p^{(0)}\tilde{h}_{pc}^{(1)}}-1\right)
\frac{1}{p_c}.
\end{equation}
Although CR-Tx does not know the exact values of
$\tilde{h}_{pc}^{(0)}$ and $\tilde{h}_{pc}^{(1)}$, it can predict
the approximate range for their ratio given the channel coherence
time relative to the time interval between the initial sensing and
re-seining stages, and obtain the corresponding upper and lower
bounds on the estimated value from (\ref{eq:estimate channel
variation}). Furthermore, the channel power gain $h_{cp}$ from CR-Tx
to PR-Rx may also change from the probing stage to the data
transmission stage. Similarly as for $\tilde{h}_{pc}$, given the
channel coherence time and the time interval between these two
stages, CR-Tx can estimate the range of $h_{cp}$ accordingly.

\section{Supervised Transmission} \label{sec:CR transmission}

In the previous section, we have proposed an active learning scheme
for the CR to estimate the channel gain from CR-Tx to PR-Rx by
exploiting the hidden PR feedback. In this section, we design
supervised transmission for CR data transmission stage shown in Fig.
\ref{fig:protocol}(b), based on the acquired knowledge from active
learning. In the following, we address two main design objectives
for CR supervised transmission: controlling the PR link performance
degradation and maximizing the CR link throughput.

\subsection{PR Performance Loss Control}\label{subsec:PR loss}

In this subsection, we illustrate how to apply the estimated
CR-to-PR channel gain from active learning for CR-Tx to predict the
performance loss of the PR link due to CR data transmission. For
simplicity, we assume that the estimation of $h_{cp}/\sigma_p^2$ is
perfect at CR-Tx, although the obtained results can be easily
extended to the case of imperfect channel estimation by utilizing
the derived estimation bounds in Section
\ref{subsec:implementation}. We consider two general types of
performance losses for the PR link: One is for the case where the PR
employs variable-rate transmission (e.g., with CP or WF power
control), named as {\it rate penalty}, which measures the PR rate
loss due to the CR interference, expressed as
$R_{l}=r_p^{(0)}-r_p^{(d)}$, where $r_p^{(d)}$ denotes the resultant
PR transmit rate in the CR data transmission stage; the other is for
the case where the PR employs constant-rate transmission (e.g., with
TCI power control), named as {\it power penalty}, which measures the
additional transmit power in dB required for the PR to maintain the
prescribed constant rate $r_p^{(0)}$ under the CR interference,
expressed as $P_l=10\times\log_{10}(p_p^{(d)}/p_p^{(0)})$, where
$p_p^{(d)}$ denotes the resultant PR transmit power in the CR data
transmission stage. Note that $r_p^{(0)}$ and $p_p^{(0)}$ denote the
PR transmit rate and power without the CR interference,
respectively, in the CR initial sensing stage. Let $p_c^{(d)}$
denote the CR transmit power in the data transmission stage.

First, the rate penalty for the PR link can be more explicitly
expressed as
\begin{align}\label{eq:rate loss}
R_{l}=\log_2\left(1+\frac{h_p
p_p^{(0)}}{\Gamma_p\sigma_p^2}\right)-\log_2\left(1+\frac{h_p
p_p^{(d)}}{\Gamma_p(\sigma_p^2+h_{cp}p_c^{(d)})}\right).
\end{align}
Note that for the convenience of analysis, we have assumed the ``SNR
gap approximation'' that accounts for the rate loss from the optimal
capacity due to practical/non-Gaussian MCS employed by the PR
\cite{Cioffi}, i.e.,
$\mathcal{R}_p(SNR_p)=\log_2(1+SNR_p/\Gamma_p)$, where $\Gamma_p\geq
1$ denotes the constant SNR gap for the PR.

In the case of CP policy for the PR, from (\ref{eq:rate loss}) it
follows that
\begin{align}
R_{l}^{\rm CP}&=\log_2\left(1+\frac{h_pQ}{\Gamma_p\sigma_p^2}\right)-
\log_2\left(1+\frac{h_pQ}{\Gamma_p(\sigma_p^2+h_{cp}p_c^{(d)})}\right) \\
&\leq \log_2\left(1+\frac{h_pQ}{\Gamma_p\sigma_p^2}\right)-\log_2\left(\frac{1+\frac{h_pQ}{\Gamma_p\sigma_p^2}}{1+\frac{h_{cp}p_c^{(d)}}{\sigma_p^2}}\right) \\
&= \log_2\left(1+\frac{h_{cp}p_c^{(d)}}{\sigma_p^2}\right)
\label{eq:rate loss UB}.
\end{align}
Therefore, CR-Tx knows that if it transmits with power $p_c^{(d)}$,
the resultant rate loss of the PR is upper-bounded by the value
given in (\ref{eq:rate loss UB}), which depends on the estimated
$h_{cp}/\sigma_p^2$, but is independent of the PR transmit power $Q$
and SNR gap $\Gamma_p$.

Consider next the case of WF power control for the PR similarly as
that given in (\ref{eq:WF}) but with $\gamma_p$ therein replaced by
$\gamma_p/\Gamma_p$. In this case, assuming that $r_p^{(0)}>0$
(otherwise the rate penalty for the PR is trivially zero), from
(\ref{eq:rate loss}) $R_{l}$ can be further expressed as
\begin{align}
R_{l}^{\rm WF}=\log_2\left(\frac{\mu
h_p}{\Gamma_p\sigma_p^2}\right)-\left(\log_2\left(\frac{\mu
h_p}{\Gamma_p(\sigma_p^2+h_{cp}p_c^{(d)})}\right)\right)^+.
\end{align}
It thus follows that
\begin{eqnarray}
R_{l}^{\rm
WF}=\left\{\begin{array}{ll}\log_2\left(1+\frac{h_{cp}p_c^{(d)}}{\sigma_p^2}\right)
& ~~ {\rm if}~
p_c^{(d)}\leq\frac{\frac{\mu h_p}{\Gamma_p\sigma_p^2}-1}{\frac{h_{cp}}{\sigma_p^2}}=\frac{2^{r_p^{(0)}}-1}{\frac{h_{cp}}{\sigma_p^2}} \\
r_p^{(0)} & ~~ {\rm otherwise} \end{array}\right..
\end{eqnarray}
Thus, CR-Tx can predict the exact rate loss of the PR as a function
of $p_c^{(d)}$, based on the estimated $h_{cp}/\sigma_p^2$ and
$r_p^{(0)}$ from the active learning.

Last, consider the power penalty of the PR with the TCI power
control given in (\ref{eq:TCI}). Assuming that
$r_p^{(d)}=r_p^{(0)}>0$, i.e., the CR interference power is not
sufficiently large to render the PR into a transmit outage
(otherwise the power penalty of the PR becomes irrelevant), it thus
follows that
\begin{equation}\label{eq:power penalty}
P_l^{\rm
TCI}=10\times\log_{10}\left(1+\frac{h_{cp}p_c^{(d)}}{\sigma_p^2}\right).
\end{equation}
Thus, CR-Tx can measure the power penalty of the PR as a function of
$p_c^{(d)}$.

From the above discussions, we see that the derived rate and power
penalties enable CR-Tx to predict quantitatively the resultant PR
performance losses corresponding to different transmit power levels
of the CR, using only the observed/estimated parameters from the
active learning.

\subsection{CR Achievable Rate}\label{subsec:CR rate}

In the previous subsection, we have shown for the CR supervised
transmission how to control the resultant PR link performance
degradation. With a given PR rate/power penalty, CR-Tx can derive
accordingly the maximum tolerable transmit power $p_c^{(d)}$. In
this subsection, we analyze the CR link achievable rate as a
function of $p_c^{(d)}$. Due to the space limitation, we consider
only the case of single-user detection at CR-Rx for decoding the CR
message, by treating the interference from PR-Tx as additive noise.
However, it is worth noting that more advanced multiuser detection
techniques can be employed at CR-Rx to decode both the CR and PR
messages in order to suppress the PR interference (details are
omitted here; the interested readers may refer to a preliminary
version of this paper \cite{Zhang09new}).

With single-user detection, the achievable rate of the CR link in
the data transmission stage can be expressed as
\begin{equation}\label{eq:rate SUD}
r_c^{(d)}=\log_2\left(1+\frac{h_cp_c^{(d)}}{\Gamma_c\left(\sigma_c^2+h_{pc}p_p^{(d)}\right)}\right)
\end{equation}
where $\Gamma_c\geq 1$ denotes the SNR gap for the CR, and
\begin{equation}
p_p^{(d)}=\mathcal{P}_p\left(\frac{h_p}{\sigma_p^2+h_{cp}p_c^{(d)}}\right)
\end{equation}
with $\mathcal{P}_p$ denoting the PR employed power control policy
(e.g., CR, TCI, or WF). It is interesting to observe that in general
the CR achievable rate is related to the CR transmit power
$p_c^{(d)}$ not only through the direct link from CR-Tx to CR-Rx,
but also through the interference link from CR-Tx to PR-Rx, the
resultant PR power adaptation and ``feedback'' interference from
PR-Tx to CR-Rx. Thus, CR-Tx is able to control the interference
power from PR-Tx by changing transmit power $p_c^{(d)}$ via the
hidden PR feedback.

With the PR feedback interference, some interesting observations can
be drawn for the CR achievable rate as a function of $p_c^{(d)}$.
Note that without the PR interference, $r_c^{(d)}$ is an increasing
function of $p_c^{(d)}$. However, with the PR feedback interference,
the interference power from PR-Tx can also be an increasing function
of $p_c^{(d)}$ in the case of persistent power control for the PR
(e.g., TCI). As a result, it is unclear in this case whether
increasing the CR transmit power will result in a net gain for its
achievable rate. Thus, it is pertinent to investigate further on
$r_c^{(d)}$ for the CR link under the PR feedback interference, as
shown in the following proposition.

\begin{proposition}\label{proposition:SUD}
For any $p_c^{(d)}\geq 0$ under which $\mathcal{P}_p(\gamma_p)$ with
$\gamma_p=\frac{h_p}{\sigma_p^2+h_{cp}p_c^{(d)}}$ is a positive,
continuous and differentiable function of $\gamma_p$,
$\frac{\partial r_c^{(d)}}{\partial p_c^{(d)}}>0$ if and only if
$\frac{\partial F(p_c^{(d)})}{\partial p_c^{(d)}}>0$, where
\begin{equation}
F(p_c^{(d)}):=\frac{p_c^{(d)}}{\sigma_c^2+h_{pc}\mathcal{P}_p\left(\frac{h_p}{\sigma_p^2+h_{cp}p_c^{(d)}}\right)}.
\end{equation}
\end{proposition}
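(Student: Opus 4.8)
The plan is to observe that $r_c^{(d)}$ is a strictly increasing function of the quantity $h_c F(p_c^{(d)})/\Gamma_c$, so that the sign of $\partial r_c^{(d)}/\partial p_c^{(d)}$ is governed entirely by the sign of $\partial F(p_c^{(d)})/\partial p_c^{(d)}$. Concretely, writing $p_p^{(d)}=\mathcal{P}_p\!\left(\frac{h_p}{\sigma_p^2+h_{cp}p_c^{(d)}}\right)$, equation (\ref{eq:rate SUD}) reads
\begin{equation}
r_c^{(d)}=\log_2\!\left(1+\frac{h_c}{\Gamma_c}\,F(p_c^{(d)})\right),\qquad
F(p_c^{(d)})=\frac{p_c^{(d)}}{\sigma_c^2+h_{pc}p_p^{(d)}}.
\end{equation}
First I would note that under the stated hypotheses — $p_c^{(d)}\geq 0$ and $\mathcal{P}_p(\gamma_p)$ positive, continuous and differentiable at the relevant $\gamma_p$ — the map $p_c^{(d)}\mapsto\gamma_p=\frac{h_p}{\sigma_p^2+h_{cp}p_c^{(d)}}$ is itself continuous and differentiable (the denominator is strictly positive since $\sigma_p^2>0$), hence $p_p^{(d)}$ is a differentiable function of $p_c^{(d)}$, and so is $F$, with the denominator $\sigma_c^2+h_{pc}p_p^{(d)}>0$. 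This legitimizes differentiating $r_c^{(d)}$ via the chain rule.

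Next I would apply the chain rule to the displayed expression for $r_c^{(d)}$:
\begin{equation}
\frac{\partial r_c^{(d)}}{\partial p_c^{(d)}}
=\frac{1}{\ln 2}\cdot\frac{\frac{h_c}{\Gamma_c}\,\dfrac{\partial F(p_c^{(d)})}{\partial p_c^{(d)}}}{1+\frac{h_c}{\Gamma_c}F(p_c^{(d)})}.
\end{equation}
The key point is that every factor outside $\partial F/\partial p_c^{(d)}$ is strictly positive: $\ln 2>0$, $h_c>0$, $\Gamma_c\geq 1>0$, and $1+\frac{h_c}{\Gamma_c}F(p_c^{(d)})>0$ because $F(p_c^{(d)})\geq 0$ (numerator $p_c^{(d)}\geq 0$, denominator strictly positive). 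Therefore $\frac{\partial r_c^{(d)}}{\partial p_c^{(d)}}$ and $\frac{\partial F(p_c^{(d)})}{\partial p_c^{(d)}}$ share the same sign, and in particular one is strictly positive if and only if the other is, which is exactly the claim.

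There is essentially no hard step here; the statement is a monotone-reparametrization argument. The only point requiring a little care — and the closest thing to an obstacle — is verifying differentiability of $F$ at the claimed points: one must check that the composition $p_c^{(d)}\mapsto\gamma_p\mapsto\mathcal{P}_p(\gamma_p)$ is differentiable and that $F$'s denominator never vanishes, both of which follow immediately from $\sigma_p^2>0$, $\sigma_c^2>0$, and the hypotheses on $\mathcal{P}_p$. I would also remark, for later use in the paper, that this reduces the question of whether raising $p_c^{(d)}$ helps the CR rate to the purely analytic question of the monotonicity of $F$, which in turn depends on the PR power-control policy $\mathcal{P}_p$ (trivially increasing under CP, but not necessarily so under persistent policies such as TCI).
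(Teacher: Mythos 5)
Your proposal is correct and matches the paper's (omitted) argument: the paper states only that the proposition ``follows from (\ref{eq:rate SUD})'', and your chain-rule computation showing that $r_c^{(d)}=\log_2\bigl(1+\tfrac{h_c}{\Gamma_c}F(p_c^{(d)})\bigr)$ is a strictly increasing function of $F$ is exactly the intended one-line justification, with the differentiability and positivity checks properly filled in.
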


The proof of Proposition \ref{proposition:SUD} follows from
(\ref{eq:rate SUD}) and is thus omitted here for brevity. It is
noted that CP and WF power control policies for the PR satisfy the
condition given in Proposition \ref{proposition:SUD}
straightforwardly, since they are both non-persistent power control.
For the TCI power control of the CR which is persistent, it can be
verified (details are omitted here for brevity) that $\frac{\partial
F(p_c^{(d)})}{\partial p_c^{(d)}}>0$, for all values of
$p_c^{(d)}\geq 0$ as required in Proposition \ref{proposition:SUD}.
It thus follows that $r_c^{(d)}$ is a strictly increasing function
of $p_c^{(d)}$ in all cases of CP, WF, or TCI power control policies
for the PR.

\section{Numerical Examples}\label{sec:numerical examples}

In this section, we present numerical examples to validate the
effectiveness of our proposed schemes for CR active learning and
supervised transmission. It is assumed that
$h_p=h_c=\tilde{h}_{pc}=1$ and $h_{cp}=h_{pc}=0.5$ in Fig.
\ref{fig:system model}. For simplicity, we assume that all these
channels are constant over the PR and CR transmission blocks where
the proposed CR schemes are implemented. We evaluate the performance
for the CR-to-PR channel gain estimation based on active learning,
as well as the PR performance degradation control and CR achievable
rate with CR supervised transmission. We consider the following two
scenarios: {\it Case I}: the PR employs a constant-power (CP)
variable-rate transmission; and {\it Case II}: the PR employs a
constant-rate variable-power (with TCI power control) transmission.
For convenience, we assume that $\sigma_p^2=\sigma_c^2=1$, and
$\Gamma_c=1$.

Consider first Case I, where the PR transmits with a constant power
$Q=100$. In this case, we are interested in investigating the
effects of finite rate granularity for the PR variable-rate
transmission on the performances of the CR active learning and
supervised transmission. Suppose that the PR transmit rate for a
given effective channel gain $\gamma_p$ is expressed as
\begin{equation}\label{eq:PR rate example}
r_p=\left\lfloor
\log_2\left(1+\frac{\gamma_pQ}{\Gamma_p}\right)\cdot\frac{1}{b}
\right \rfloor \cdot b
\end{equation}
in bps/Hz, where $\lfloor \cdot \rfloor$ denotes the floor
operation; and $b>0$ denotes the ``bit granularity'' due to the fact
that practical MCS only supports a finite set of discrete transmit
rates corresponding to integer multiplications of $b$. We assume
that $\Gamma_p=3$dB and $b=1$ (i.e., one-bit granularity). From
(\ref{eq:channel gain estimate bounds}), it follows that the upper
and lower bounds on $h_{cp}$ in the case of one-bit granularity are
obtained as
\begin{equation}\label{eq:active learning simulation}
\left(\frac{2^{r_p^{(0)}}-1}{2^{r_p^{(1)}+1}-1}-1\right)\frac{1}{p_c}\leq
h_{cp}\leq
\left(\frac{2^{r_p^{(0)}+1}-1}{2^{r_p^{(1)}}-1}-1\right)\frac{1}{p_c}
\end{equation}
where $r_p^{(0)}$ and $r_p^{(1)}$ denote the discrete rates of the
PR observed by the CR in the sensing and re-sensing stages,
respectively. In Fig. \ref{fig:PR CP}(a), we show the estimated
upper and lower bounds for $h_{cp}$  using the above estimation
rule. It is observed that with small value of CR probing signal
power $p_c$, the gap between the estimated upper and lower bounds
for $h_{cp}$ is large, suggesting that the estimation of $h_{cp}$ is
not accurate. This is due to the fact that if $p_c$ is too small,
the interference at PR-Rx is not sufficiently strong to make the PR
reduce its transmit rate by at least one bit (Note that $b=1$), and
as a result, the CR observes the same value of $r_p^{(1)}$ as
$r_p^{(0)}$. However, with larger value of $p_c$, the CR is able to
make $r_p^{(1)}< r_p^{(0)}$ and thus obtain a more accurate
estimation for $h_{cp}$. Thus, there is in general a tradeoff
between minimizing the PR performance degradation and the CR-to-PR
channel estimation error for the CR active learning. In Fig.
\ref{fig:PR CP}(b) and \ref{fig:PR CP}(c), we show the PR rate
penalty and CR achievable rate, respectively, vs. CR transmit power
$p_c^{(d)}$ for CR supervised data transmission. It is observed that
both the PR rate penalty and CR transmit rate increase with
$p_c^{(d)}$. Moreover, in Fig. \ref{fig:PR CP}(b), we compare the
actual resultant PR rate penalty (with one-bit granularity) to its
estimated value using (\ref{eq:rate loss UB}) and the estimated
upper bound on $h_{cp}$ from active learning with $p_c=10$. It is
observed that the estimated PR rate penalties are indeed valid upper
bounds on their true values for different values of $p_c^{(d)}$.

It is worth comparing the spectrum-sharing performance for the PR
and CR links with the proposed active learning and supervised
transmission for the CR, with the approach (refereed to as ``No
Feedback'') without exploiting the PR hidden feedback, or the
approach (refereed to as ``Perfect Feedback'') with the perfect
knowledge of the CR-to-PR channel via a dedicated feedback channel
from PR-Rx to CR-Tx. Note that for all three design approaches, the
achievable rates for the CR with a given transmit power $p_c^{(d)}$
are identical, as shown in Fig. \ref{fig:PR CP}(c). However, the
main differences among these designs are highlighted as follows. For
the case of ``No Feedback'', the CR has no means to predict the PR
performance loss as a function of $p_c^{(d)}$ and thus cannot deploy
any opportunistic transmission; as a result, the CR has to transmit
constantly with a very low power and thus results in low spectral
efficiency. In contrast, with the new proposed design, the CR can
always predict its maximum transmit power given the PR transmission
margin and decide its transmit rate accordingly. On the other hand,
for the case of ``Perfect Feedback'', as shown in Fig. \ref{fig:PR
CP}(b), for a given PR rate penalty value, the CR with the perfect
channel knowledge can transmit with a larger power than the proposed
design with active learning based channel estimation, and thus the
maximum achievable rate for the CR also becomes larger (cf. Fig.
\ref{fig:PR CP}(b) \& \ref{fig:PR CP}(c)).

Next, consider Case II, where the PR transmits with a constant rate
or equivalently maintains a constant receiver SNR, $SNR_p^{(T)}=10$.
Thus, the TCI power control given in (\ref{eq:TCI}) is used by the
PR with $\gamma_p^{(T)}=0.1$. In this case, we are interested in
investigating the effects of a finite number of observation samples
and receiver noise at CR-Tx for estimating the received PR signal
powers on the performances of CR active learning and supervised
transmission. From (\ref{eq:channel gain estimate new UB}) and
(\ref{eq:channel gain estimate new LB}), it follows that the upper
and lower bounds on $h_{cp}$ in the case of a finite number of
observed PR signal samples are obtained as
\begin{equation}\label{eq:active learning simulation 2}
\left(\frac{\left(\hat{q}_p^{(1)}-\zeta\sqrt{\hat{c}}\right)}
{\left(\hat{q}_p^{(0)}+\zeta\sqrt{\hat{c}}\right)}-1\right)
\frac{1}{p_c}\leq h_{cp}\leq
\left(\frac{\left(\hat{q}_p^{(1)}+\zeta\sqrt{\hat{c}}\right)}
{\left(\hat{q}_p^{(0)}-\zeta\sqrt{\hat{c}}\right)}-1\right)
\frac{1}{p_c}
\end{equation}
where $\hat{q}_p^{(0)}$ and $\hat{q}_p^{(1)}$ denote the observed
powers at CR-Tx in the sensing and re-sensing stages, respectively.
In order to keep the estimated $h_{cp}$ within the above range with
a probability guarantee of $99\%$, we choose $\zeta=2.3$ since
$Q(2.3)\approx0.01$. Furthermore, we set $P_{\max}=100$ and $M=500$
for determining the constant $\hat{c}$ defined in (\ref{eq:define
c}). In Fig. \ref{fig:PR TCI}(a), we show the estimated upper and
lower bounds for $h_{cp}$ using the above rule. Similar to our
previous observations for Fig. \ref{fig:PR CP}(a), it is observed
that the CR probing power $p_c$ needs to be sufficiently large in
order to make a reasonably good estimate on $h_{cp}$. In Fig.
\ref{fig:PR TCI}(b) and \ref{fig:PR TCI}(c), we show the PR power
penalty and CR achievable rate, respectively, vs. CR transmit power
$p_c^{(d)}$ for CR supervised data transmission. It is observed that
both the PR power penalty and CR transmit rate increase with
$p_c^{(d)}$. Moreover, in Fig. \ref{fig:PR TCI}(b), we compare the
actual PR power penalty to its estimated value using (\ref{eq:power
penalty}) and the estimated upper bound on $h_{cp}$ from active
learning with $p_c=10$. It is observed that the estimated PR power
penalties are valid upper bounds on the true values, which become
tighter for smaller values of $p_c^{(d)}$. Comparing the CR
achievable rates in Fig. \ref{fig:PR CP}(c) and Fig. \ref{fig:PR
TCI}(c), it is observed that the CR rate increase with $p_c^{(d)}$
is much slower in the latter than the former case. This is because
for Fig. \ref{fig:PR TCI}(c), the PR employs TCI power control
instead of CP as for Fig. \ref{fig:PR CP}(c), and thus the PR
feedback interference power at CR-Rx increases with $p_c^{(d)}$
instead of being a constant as for the case of Fig. \ref{fig:PR
CP}(c) with CP.

\section{Conclusion} \label{sec:conclusion}

This paper introduces a new design paradigm for spectrum sharing
based CRs, where the CR designs its learning and transmission from
the observed PR transmit power/rate adaptations upon receiving a
probing signal from the CR, namely the hidden PR feedback. First, a
novel active learning scheme is proposed for the CR to estimate the
channel gain from its transmitter to the PR receiver, which is
essential for the CR interference control to the PR. Second, with
the acquired channel knowledge and PR transmit adaptations from
active learning, the CR supervised data transmission is designed by
effectively controlling the performance degradation of the PR as a
function of the CR transmit power. Moreover, this paper shows that
the CR is able to predict its own achievable rate under the PR
feedback interference, which is coupled with the CR transmit power
via the hidden PR feedback. This paper presents a new transmission
protocol for the CR to implement the proposed learning and
transmission schemes, and proposes the solutions to deal with
various important practical issues. The results in this paper
provide a new promising approach to interference management for
decentralized multiuser communication systems.

\newpage

\begin{figure}
\psfrag{a}{\hspace{0.3in} PR-Tx}\psfrag{b}{\hspace{0.3in} CR-Tx}
\psfrag{c}{PR-Rx}\psfrag{d}{CR-Rx}\psfrag{e}{$h_p$}\psfrag{f}{$h_c$}
\psfrag{g}{\hspace{-0.1in}$h_{pc}$}\psfrag{h}{$h_{cp}$}\psfrag{i}{\hspace{-0.05in}$\tilde{h}_{pc}$}
\begin{center}
\scalebox{1.0}{\includegraphics*[54pt,594pt][368pt,755pt]{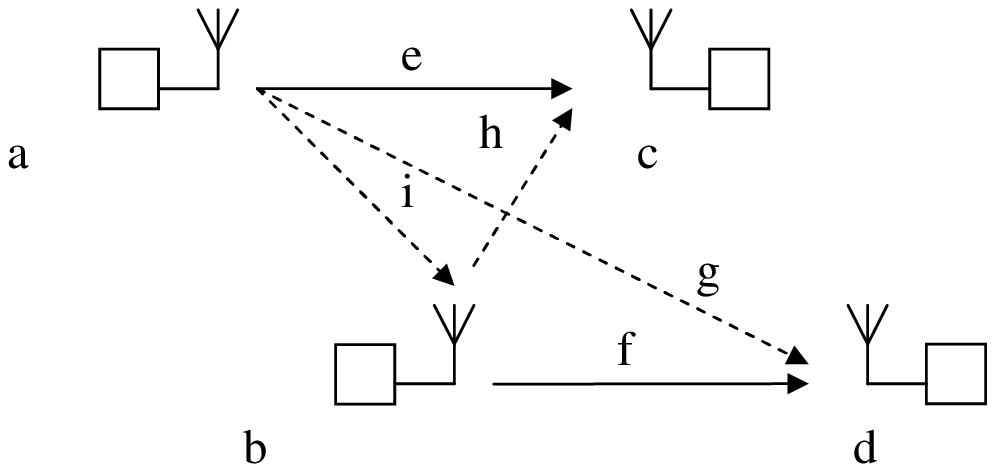}}
\end{center}
\caption{Spectrum sharing between a PR link and a CR
link.}\label{fig:system model}
\end{figure}

\begin{figure}
\centering{
 \epsfxsize=4.5in
    \leavevmode{\epsfbox{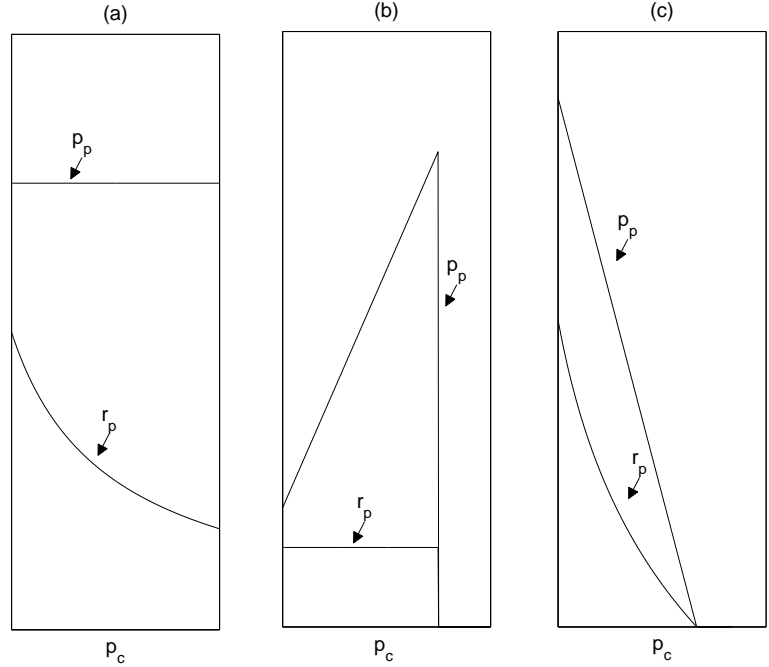}} }
\caption{Plots of $p_p$ and $r_p$ as functions of $p_c$ for (a) CP;
(b) TCI; and (c) WF power control of the PR.}\label{fig:PR rate
power}
\end{figure}

\begin{figure}
\psfrag{c}{Training}\psfrag{d}{Data Transmission}
\psfrag{e}{Sensing}\psfrag{f}{Probing}\psfrag{g}{Re-sensing}
\begin{center}
\scalebox{0.8}{\includegraphics*[44pt,634pt][370pt,740pt]{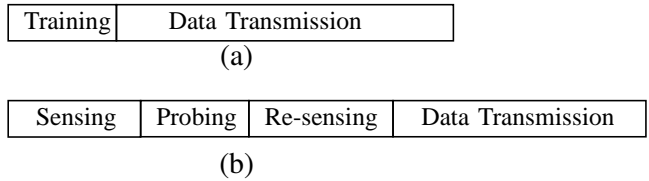}}
\end{center}
\caption{Transmission protocols for (a) the PR; and (b) the
CR.}\label{fig:protocol}
\end{figure}

\begin{figure}
\centering{
 \epsfxsize=4.0in
    \leavevmode{\epsfbox{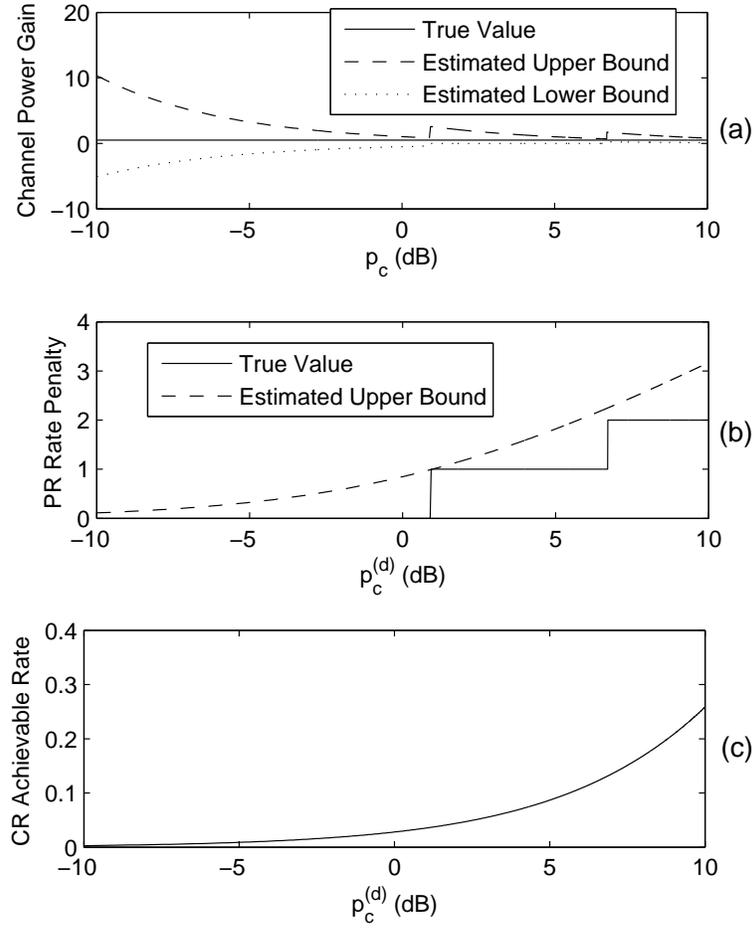}} }
\caption{Performance of CR active learning and supervised
transmission when PR employs constant-power variable-rate
transmission (Case I): (a) CR-to-PR channel power gain estimation;
(b) PR rate penalty; and (c) CR achievable rate.}\label{fig:PR CP}
\end{figure}

\begin{figure}
\centering{
 \epsfxsize=4.0in
    \leavevmode{\epsfbox{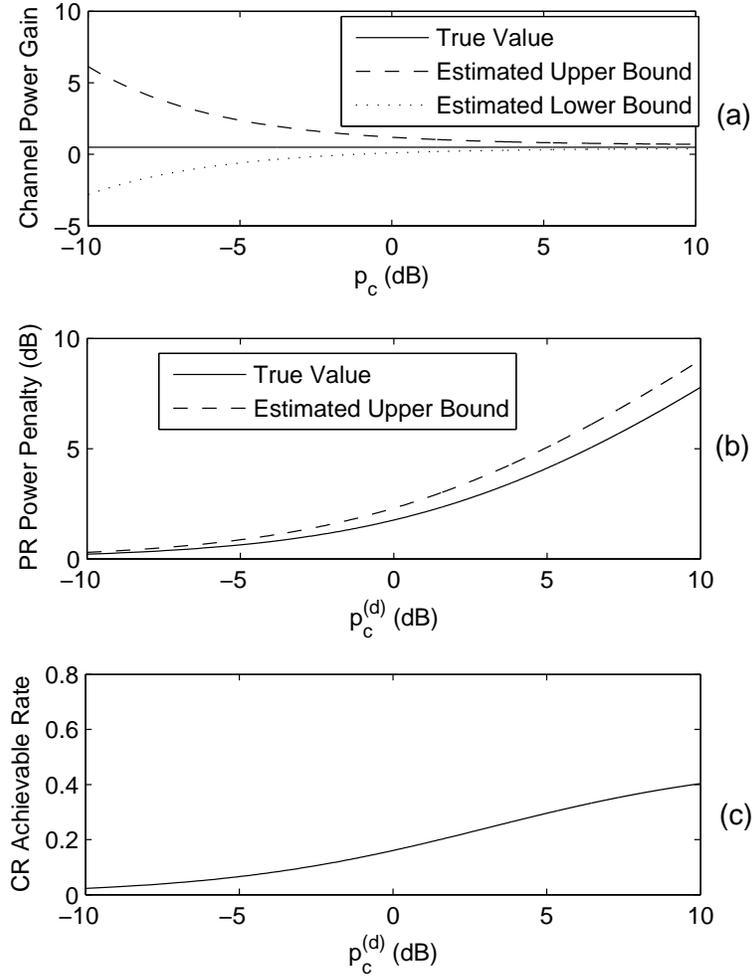}} }
\caption{Performance of CR active learning and supervised
transmission when PR employs constant-rate variable-power
transmission (Case II): (a) CR-to-PR channel power gain estimation;
(b) PR power penalty; and (c) CR achievable rate.}\label{fig:PR TCI}
\end{figure}

\end{document}